\newtheorem{thm}{Theorem}
\newtheorem{prop}{Proposition}
\providecommand{\cref}[1]{Chapter~\ref{chap:#1}}
\providecommand{\C}{\ensuremath{\mathbb{C}}}
\providecommand{\Z}{\ensuremath{\mathbb{Z}}}
\providecommand{\abs}[1]{\left|#1\right|}
\providecommand{\norm}[1]{\lVert#1\rVert}
\providecommand{\inprod}[1]{\left\langle#1\right\rangle}
\providecommand{\bydef}{\overset{\text{def}}{=}}
\providecommand{\diag}{\mathop{\mathrm{diag}}}
\providecommand{\e}{\ensuremath{\mathrm{e}}}
\providecommand{\I}{\ensuremath{j}}
\providecommand{\epi}[1]{\e^{\hspace{1pt}\I{#1}}}
\providecommand{\eni}[1]{\e^{-\I{#1}}}
\renewcommand{\vec}[1]{\ensuremath{\mathbf{#1}}}
\providecommand{\mat}[1]{\ensuremath{\mathbf{#1}}}
\providecommand{\wt}[1]{\ensuremath{\widetilde{#1}}}
\providecommand{\mA}{\mat{A}} 
\providecommand{\mC}{\mat{C}} 
\providecommand{\mI}{\mat{I}}  
\providecommand{\mK}{\mat{K}}
 \providecommand{\mU}{\mat{U}} 
 \providecommand{\mT}{\mat{T}}
\providecommand{\va}{\vec{a}}
 \providecommand{\vn}{\vec{n}} 
 \providecommand{\vp}{\vec{p}}
\providecommand{\vq}{\vec{q}} \providecommand{\vr}{\vec{r}}
\providecommand{\vs}{\vec{s}}
\providecommand{\vt}{\vec{t}}
 \providecommand{\vw}{\vec{w}}
\providecommand{\vx}{\vec{x}} \providecommand{\vy}{\vec{y}}
\renewcommand{\H}{H}
\newcommand{\T}{T}
\newcommand{\SNR}{\mathsf{SNR}}
\newcommand{\SINR}{\mathsf{SINR}}
\newcommand{\UDR}{\mathsf{UDR}}
\newcommand{\E}{\mathbb{E}}
\newcommand{\ejw}{\epi{\omega}}
\newcommand{\sinc}{\operatorname{sinc}}
\definecolor{nice_blue}{rgb}{0.5 0.1 0.9}
\definecolor{note_color}{rgb}{1 0.3 0.3}
\newcommand{\rev}[1]{{#1}}
\DeclareMathOperator*{\minimize}{minimize}
\DeclareMathOperator*{\maximize}{maximize}
\begin{document}
\title{Raking the Cocktail Party}

\author{Ivan~Dokmani\'c,~\IEEEmembership{Student Member,~IEEE,}
        Robin~Scheibler,~\IEEEmembership{Student Member,~IEEE,}
        and~Martin~Vetterli,~\IEEEmembership{Fellow,~IEEE}
\thanks{Authors are with LCAV-EPFL. This work was supported by the ERC Advanced
  Investigators Grant: Sparse Sampling: Theory, Algorithms and Applications
  SPARSAM no. 247006, and a Google Doctoral Fellowship.}}

\markboth{Submitted to IEEE Journal on Selected Topics in Signal Processing,~July~2014}%
{Dokmani\'c \MakeLowercase{\textit{et al.}}: Acoustic Rake Receivers}

\maketitle

\begin{abstract}
  We present the concept of an acoustic rake receiver---a microphone
  beamformer that uses echoes to improve the noise and interference
  suppression. The rake idea is well-known in wireless communications; it
  involves constructively combining different multipath components that arrive
  at the receiver antennas. Unlike spread-spectrum signals used in wireless
  communications, speech signals are not orthogonal to their shifts.
  \rev{Therefore, we focus on the spatial structure, rather than temporal.}
  Instead of explicitly estimating the channel, we create correspondences
  between early echoes in time and image sources in space. These multiple
  sources of the desired and the interfering signal offer additional spatial
  diversity that we can exploit in the beamformer design.

  We present several ``intuitive'' and optimal formulations of acoustic rake
  receivers, and show theoretically and numerically that the rake formulation
  of the maximum signal-to-interference-and-noise beamformer offers
  significant performance boosts in terms of noise and interference
  suppression. Beyond signal-to-noise ratio, we observe gains in terms of the
  \emph{perceptual evaluation of speech quality} (PESQ) metric for the speech
  quality. We accompany the paper by the complete simulation and processing
  chain written in Python. The code and the sound samples are available online
  at
  \url{http://lcav.github.io/AcousticRakeReceiver/}.
\end{abstract}

\begin{IEEEkeywords}
Room impulse response, beamforming, echo sorting, acoustic rake receiver
\end{IEEEkeywords}


\section{Introduction}

\IEEEPARstart{R}{ake} receivers take advantage of multipath propagation,
instead of trying to mitigate it. The basic idea of the rake receivers
(habitually used in wireless communications) is to coherently add the
multipath components, and thus increase the effective signal-to-%
noise ratio ($\SNR$). The original scheme was developed for
single-input-single-output systems \cite{Price:1958ec}, and it was later
extended to arrays of antennas \cite{Naguib:1997ic,Khalaj:1994fc} that exploit
spatial diversity. By using antenna arrays and spatial processing, multipath
components that would otherwise not be resolvable because they arrive at
similar times, become resolvable because they arrive from different
directions.
  
In spite of the success of the rake receivers in wireless communications, the
principle has not received significant attention in room acoustics.
Nevertheless, constructive use of echoes in rooms to improve beamforming has
been mentioned in the literature
\cite{Annibale:2011wg,ODonovan:2010gw,Jan:1995gs}. In particular, the term
\emph{acoustic rake receiver} (ARR) was used in the SCENIC project proposal
\cite{Annibale:2011wg}. 

The list of ingredients for ARRs in room acoustics is similar as in wireless
communications: a wave (acoustic instead of electromagnetic) propagates in
space; reflections and scattering cause the wave to arrive at the receiver
through multiple paths in addition to the direct path, and these multipath
components all contain the source waveform.



\rev{The main difference is that in room acoustics we do not get to design the
input signal. Spreading sequences used in CDMA are designed to be orthogonal
to their shifts, which facilitates the multipath channel estimation; this
orthogonality is not exhibited by speech. Moreover, speech segments are very
long with respect to the time between the two consecutive echoes.}


On the contrary, there are no significant differences in terms of the spatial
structure. If we know where the echoes are coming from, we can design spatial
processing algorithms---for example beamformers---that use multiple copies of
the same signal arriving from different directions.

Imagine first that we know the room geometry. Then, if we localize the source,
we can predict where its echoes will come from using simple geometric rules
\cite{Allen:1979ua,Borish:1984uu}. Localizing the direct signal in a
reverberant environment is a well-understood problem
\cite{Ward:2003cf}. What is more, we do not need to know the room shape in
detail---locations of the most important reflectors (ceiling, floor, walls)
suffice to localize the major echoes. In many cases this knowledge is readily
available from the floor plans or measurements. In ad-hoc deployments, the
room geometry may be difficult to obtain. If that is the case, we can first
perform a calibration step to learn it. An appealing method to infer the room
geometry is by using sound, as was demonstrated recently
\cite{Ribeiro:2012ge,Antonacci:2012hc,Dokmanic:2011vc,Dokmanic:2013dz}.

We may still be able to take advantage of the echoes without estimating the
room geometry. Note that we are not after the room geometry itself; rather, we
only need to know where the early echoes are coming from.
\rev{Echoes can be seen as signals emitted by \emph{image sources}---mirror
images of the true source across reflecting walls \cite{Allen:1979ua}. Knowing
where the echoes are coming from is equivalent to knowing where the image
sources are.}

Image source localization can be solved, for example, by
\emph{echo sorting} as described in
\cite{Dokmanic:2013dz}. Alternatively, O'Donovan, Duraiswami and Zotkin
\cite{ODonovan:2010gw} propose to use an \emph{audio camera} with a large
number of microphones to find the images. Once the image sources are localized
(in a calibration phase or otherwise), we can predict their movement using
geometrical rules, as discussed in Section
\ref{sec:finding_and_tracking_the_echoes}. \rev{Thus, the acoustic raking is a
multi-stage process comprising image source localization, tracking, and
beamforming weight computation. The complete block diagram is shown in
Fig.~\ref{fig:arr_block_diagram}.}


It is interesting to note the analogy between the ARRs and the human auditory
perception. It is well established that the early echoes improve speech
intelligibility \cite{Bradley:2003jf,Lochner:1964do}. In fact, adding energy
in the form of early echoes (approximately within the first 50 ms of the room
impulse response (RIR)) is equivalent to adding the same energy to the direct
sound
\cite{Bradley:2003jf}. This observation suggests new designs for indoor
beamformers, with different choices of performance measures and reference
signals. A related discussion of this topic is given by Habets and co-authors
\cite{Habets:2010iw}, who examine the tradeoff between dereverberation and
denoising in beamforming. \rev{In addition to the standard $\SNR$, we propose
to use the useful-to-detrimental ratio ($\UDR$), first defined by Lochner and
Burger \cite{Lochner:1964do}, and used by Bradley, Sato and Picard
\cite{Bradley:2003jf}. We generalize $\UDR$ to a scenario with interferers,
defining it as the the ratio of the direct and early reflection energy to
the energy of the noise and interference.}

ARRs focus on the early part of the RIR, trying to concentrate the energy
contained in the early echoes. In that regard, there are similarities between
ARRs and channel shortening \cite{Thomas:2011iv,Zhang:2010vq}. Channel
shortening produces filters that are much better behaved than complete
inversion, \emph{e.g.}, by the multiple-input-output-theorem (MINT)
\cite{Miyoshi:1988ig, Furuya:2001ta}. Nevertheless, it is assumed that we know
the acoustic impulse responses between the sources and the microphones. In
contrast to channel shortening, as well as other methods assuming this
knowledge \cite{Benesty:2007hj,Miyoshi:1988ig}, we never attempt the difficult
task of estimating the impulse responses. Our task is simpler: we only need to
detect the early echoes, and lift them to 3D space as image sources.

\subsection{Main Contributions and Limitations}

\begin{figure}
\centering
\includegraphics[width=3.5in]{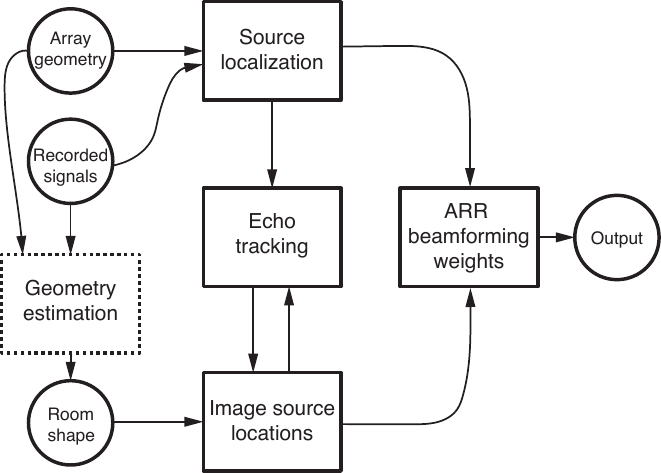}
\caption{A block diagram for acoustic rake receivers. In this paper, we focus
on ARR beamforming weight computation, and we briefly discuss echo tracking
and image source localization. The geometry estimation block is optional (room
geometry could be known in advance), hence the dashed box.}
\label{fig:arr_block_diagram}
\end{figure}

We introduce the acoustic rake receiver (ARR) as the echo-aware microphone
beamformer. We present several formulations with different properties, and
analyze their behavior theoretically and numerically. The analysis shows
that ARRs lead to significantly improved $\SNR$ and interference cancellation
when compared with standard beamformers that only extract the direct path.
ARRs can suppress interference in cases when conventional beamforming is bound
to fail, for example when an interferer is occluding the desired source (for a
sneak-peak, fast forward to Fig.
\ref{fig:beam_scenarios}). We present optimal formulations that outperform the
earlier delay-and-sum (DS) approaches \cite{Jan:1995gs}, especially when
interferers are present. \rev{Significant gains are observed not only in terms
of signal-to-interference-and-noise ratio ($\SINR$) and $\UDR$, but also in
terms of perceptual evaluation of speech quality (PESQ)
\cite{Rix:2001bv}.}

\rev{The raking microphone beamformers are particularly well-suited to
extracting the desired speech signal in the presence of interfering sounds, in
part because they can focus on echoes of the desired sound and cancel the
echoes of the interference. The analogous human capacity to focus on a
particular acoustic stimulus while not perceiving other, unwanted sounds is
called the \emph{cocktail party effect} \cite{Haykin:2005cv}. The title of
this paper was inspired by that analogy.}

We design and apply the ARRs in the frequency domain. Frequency domain
formulation is simple and concise; it allows us to focus on objective gains
from acoustic raking. Time-domain designs \cite{Herbordt:2003kea} offer better
control over the impulse responses of the beamforming filters, but they are
out of the scope of this paper. For a recent time-domain approach to ARR, see
\cite{Scheibler:2015xx}.

Let us also mention some limitations of our results. For clarity, the
numerical experiments are presented in a 2D ``room'', and as such are directly
applicable to planar (\emph{e.g.} linear or circular) arrays. Extension to 3D
arrays is straightforward. We do not discuss robust formulations that address
uncertainties in the array calibration. Microphones are assumed to be ideally
omni-directional with a flat frequency response. Except for Section
\ref{sec:finding_and_tracking_the_echoes}, we assume that the locations of the
image sources are known. We explain how to find the image sources when the
room geometry is either known or unknown, but we do not provide a deep
overview of the geometry estimation techniques. To this end, we suggest a
number of references for the interested reader. We consider the walls to be
flat-fading; in reality, they are frequency selective. We do not discuss the
estimation of various covariance matrices \cite{Carlson:1988hn}.

The results in this paper are reproducible. Python (NumPy)
\cite{Oliphant:2007dm} code for the beamforming routines, for the STFT
processing engine, and to generate the figures and the sound samples is
available online at \url{http://lcav.github.io/AcousticRakeReceiver/}.

\subsection{Paper Outline}

In Section \ref{sec:notation} we explain the notation and the signal model
used in the paper. A brief overview of the relevant beamforming techniques and
performance analysis is given in Section \ref{sec:beamforming}. We formulate
the acoustic rake receiver in Section \ref{sec:acoustic_rake}, and we present
a theoretical and numerical analysis of the corresponding beamformers. Section
\ref{sec:finding_and_tracking_the_echoes} explains how to locate the image
sources, and comments on localizing the direct source. Numerical experiments
are presented in Section
\ref{sec:numerical_experiments}.


\section{Notation and Signal model}

We denote all matrices by bold uppercase letters, for example $\mA$, and all
vectors by bold lowercase letters, for example $\vx$. The Hermitian transpose
of a matrix or a vector is denoted by $(\, \cdot \,)^\H$, as in $\mA^\H$, and the
Euclidean norm of a vector by $\norm{\, \cdot \,}$, that is, $\norm{\vx} \bydef
(\vx^\H \vx)^{\nicefrac{1}{2}}$.

\label{sec:notation}

Suppose that the desired source of sound is at the location $\vs_0$ in a room.
Sound from this source arrives at the microphones located at $[\vr_m]_{m=1}^M$
via the direct path, but also via the echoes from the walls. The echoes can be
replaced by the image sources---mirror images of the true sources across the
corresponding walls---according to the image source model \cite{Allen:1979ua,
Borish:1984uu}. An important consequence is that instead of modeling the
source of the desired or the interefering signal as a single point in a room,
we can model it as a collection of points in free space. A more detailed
discussion of the image source model is given in Section
\ref{sec:finding_and_tracking_the_echoes}.

Denote the signal emitted by the source $\wt{x}[n]$ (\emph{e.g.} the speech
signal). Then all the image sources emit $\wt{x}[n]$ as well, and the signals
from the image sources reach the microphones with the appropriate delays. In
our application, the essential fact is that the echoes correspond to image
sources. We denote the image source positions by $\vs_k$, $1 \leq k \leq K$,
where $K$ denotes the largest number of image sources considered. Note that we
do not care about the sequence of walls that generates $\vs_k$, nor do we care
about how many walls are in the sequence. For us, all $\vs_k$ are simply
additional sources of the desired signal. The described setup is illustrated
Fig.~\ref{fig:notation}.

Suppose further that there is an interferer at the location $\vq_0$ (for
simplicity, we consider only a single interferer). The interferer emits the
signal $\wt{z}[n]$, and its image sources emit $\wt{z}[n]$ as well. Similarly
as for the desired source, we denote by $\vq_k$, $1 \leq k \leq K'$ the
positions of the interfering image sources, with $K'$ being the largest number
of interfering image source considered. The model mismatch (\emph{e.g.}, the
image sources of high orders and the late reverberation) and the noise are
absorbed in the term $\wt{n}_m[n]$.

\begin{table}[t!]
  \centering

  \caption{Summary of notation.}

  \begin{tabular}{@{}ll@{}}
  \toprule
  {\bf Symbol} & {\bf Meaning} \\
  \midrule
  $M$ &
  Number of microphones \\
  $\vr_m$ &
  Location of the $m$th microphone \\
  $\vs_0$ &
  Location of the desired source \\
  $\vs_i$ &
  Location of the $i$th image of the desired source ($i \geq 1$) \\
  $\vq_0$ &
  Location of the interfering source \\
  $\vq_i$ &
  Location of the $i$th image of the interfering source ($i \geq 1$) \\
  $x(\ejw)$ &
  Spectrum of the sound from the desired source \\
  $z(\ejw)$ &
  Spectrum of the sound from the interfering source \\
  $\vw(\ejw)$ &
  Vector of beamformer weights \\
  $K$ &
  Number of considered desired image sources \\
  $K'$ &
  Number of considered interfering image sources \\
  $a_m(\vs)$ &
  $m$th component of the steering vector for a source at $\vs$ \\
  $y_m$ &
  Signal picked up by the $m$th microphone \\
  $\|\,\cdot\,\|$ &
  Euclidean norm, $\|\vec{x}\| = (\sum |x_i|^2)^{\nicefrac{1}{2}}$. \\
  \bottomrule
  \end{tabular}
  \label{tab:notation-summary}
\end{table}

\begin{figure}
  \centering
  \includegraphics{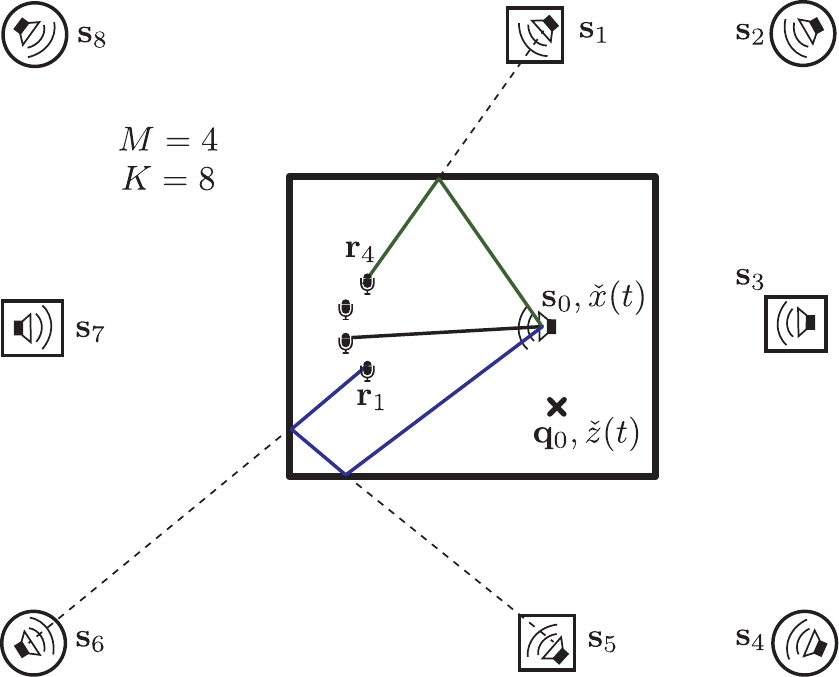}
  \caption{Illustration of the notation and concepts. Echoes of the desired
  signal emitted at $\vs_0$ can be modeled as a direct sound coming from the
  image sources of $\vs_0$. Two generations of image sources are illustrated:
  first ($\vs_1, \vs_3, \vs_5, \vs_7$) and second ($\vs_2, \vs_4, \vs_6,
  \vs_8$), as well as the corresponding \emph{sound rays} for $\vs_5$ and
  $\vs_6$. The interferer is located at $\vq_0$ (its image sources are not
  shown), and the microphones are located at $\vr_1, \ldots, \vr_4$.}
  \label{fig:notation}
\end{figure}

The signal received by the $m$th microphone is then a sum of convolutions
\begin{equation}
\begin{aligned}
  \wt{y}_m[n] &= \sum_{k=0}^K \big(\wt{a}_m(\vs_k) \ast \wt{x}\big)[n] \\
                 &+ \sum_{k=0}^{K'} \big(\wt{a}_m(\vq_k) \ast \wt{z}\big)[n]
                  + \wt{n}_m[n],
\end{aligned}
\end{equation}
where $\wt{a}_m(\vs_k)$ denotes the impulse response of the channel between
the source located at $\vs_k$ and the $m$th microphone---in this case a delay
and a scaling factor.

We design and analyze all the beamformers in the frequency domain. That
is, we will be working with the DTFT of the discrete-time signal $\wt{x}$,
\begin{equation}
  x(\ejw) \bydef \sum_{n \in \Z} \wt{x}[n] \, \eni{\omega n}.
\end{equation}
In practical implementations, we use the discrete-time short-time Fourier
transform (STFT). More implementation details are given in Section
\ref{sec:numerical_experiments}.

Using these notations, we can write the signal picked up by the $m$th
microphone as
\begin{equation}
\begin{aligned}
  y_m(\ejw) &= \sum_{k=0}^K a_m(\vs_k, \Omega) x(\ejw) \\
                &+ \sum_{k=0}^{K'} a_m(\vq_k, \Omega) z(\ejw)
                + n_m(\ejw),
\end{aligned}
\end{equation}
where $n_m(\ejw)$ models the noise and other errors, and $a_m(\vs_k,
\Omega)$ denotes the $m$th component of the steering vector for the source
$\vs_k$. The steering vector is the Fourier transform of the continuous
version of the impulse response $\wt{a}(\vs_k)$, evaluated at the frequency
$\Omega$. The discrete-time frequency $\omega$ and the continuous-time frequency
$\Omega$ are related as $\omega = \Omega \, T_s$, where $T_s$ is the sampling
period. The steering vector is then simply $\va(\vs_k, \Omega) = [a_m(\vs_k,
\Omega)]_{m=0}^{M-1}$.

We can write out the entries of the steering vectors explicitly for a point
source in free space. They are given as the appropriately scaled free-space
Green's functions for the Helmholtz equation \cite{Duffy:2001uy},

\begin{equation}
  \label{eq:steering}
  a_m(\vs_k, \Omega) = \frac{\alpha_k}{4\pi \norm{\vr_m - \vs_k}} \eni{\kappa \norm{\vr_m - \vs_k}},
\end{equation}
where we define the wavenumber as $\kappa \bydef \Omega/c$, and $\alpha_k$ is
the attenuation corresponding to $\vs_k$.


Using vector notation, the microphone signals can be written concisely as
\begin{equation}
\label{eq:signal_model_freq}
\begin{aligned}
  \vy(\ejw) = \mA_s(\ejw) \vec{1} x(\ejw) + \mA_q(\ejw) \vec{1} z(\ejw) + \vn(\ejw),
\end{aligned}
\end{equation}
where $\mA_s(\ejw) \bydef [\va(\vs_1, \Omega), \ldots, \va(\vs_K, \Omega)]$,
$\mA_q(\ejw) \bydef [\va(\vq_1, \Omega), \ldots, \va(\vq_{K'}, \Omega)]$, and
$\vec{1}$ is the all-ones vector. From here onward, we suppress the frequency
dependency of the steering vectors and the beamforming weights to reduce the
notational clutter.
 
\section{Beamforming Preliminaries} 
\label{sec:beamforming}

Microphone beamformers combine the outputs of multiple microphones in order to
achieve spatial selectivity, thereby suppressing noise and interference
\cite{VanVeen:1988hg}. In the frequency domain, a beamformer forms a linear
combination of the microphone outputs to yield the output $u$. That is,

\begin{equation}
\label{eq:beamforming}
\begin{aligned}
  u
  = \vw^\H \vy 
  = \vw^\H \mA_s \vec{1} x
   + \vw^\H \mA_q \vec{1} z + \vw^\H \vn,
\end{aligned}
\end{equation}
where the vector $\vw \in \C^{M}$ contains the beamforming weights.

The weights $\vw$ are often selected so that they optimize some design
criterion. Common examples of beamformers are the delay-and-sum (DS)
beamformer, minimum-variance-distortionless-response (MVDR) beamformer,
maximum-signal-to-interference-and-noise (Max-SINR) beamformer, and
minimum-mean-squared-error (MMSE) beamformer. In this paper we discuss the
rake formulation of the DS and the Max-SINR beamformers; for completeness, we
first describe the non-raking variants.

\subsection{Delay-and-Sum Beamformer}

DS is the simplest and often quite effective beamformer \cite{VanVeen:1988hg}.
Assume that we want to listen to a source at $\vs$. Then we form the the DS
beamformer by compensating the propagation delays from the source $\vs$ to the
microphones $\vr_m$,
\begin{align}
  \label{eq:DS_definition}
  u_{\text{DS}} &=
  \frac{1}{M} \sum_{m=0}^{M-1} y_m \epi{\kappa \norm{\vr_m - \vs}} \\
  \label{eq:DS_definition_second}
  &\approx \frac{x}{4\pi\|\bar{\vr}-\vs\|} + \frac{1}{M} \sum_{m=0}^{M-1} n_m,
\end{align}
where $\bar{\vr} = \frac{1}{M} \sum_{m=0}^{M-1} \vr_m$ denotes the center of
the array. The beamforming weights can be read out from
\eqref{eq:DS_definition} as
\begin{equation}
  \label{eq:delay_and_sum_weights}
  \vw_{\text{DS}} = {\va(\vs)} \, \big/ \, {\norm{\va(\vs)}},
\end{equation}
where we used the definition of $y_m$ \eqref{eq:signal_model_freq} and the
definition of the steering vector \eqref{eq:steering}. We can see from \eqref{eq:DS_definition_second} that if $\vn \sim
\mathcal{N}(\vec{0}, \sigma^2 \mI_M)$, then the output noise is distributed
according to $\mathcal{N}(0, \sigma^2 / M)$, that is, we obtain an $M$-fold
decrease in the noise variance at the output with respect to any reference
microphone.




\subsection{Maximum Signal-to-Interference-and-Noise Ratio Beamformer}

The $\SINR$ is an important figure of merit used to assess the performance of
ARRs, and to compare it with the standard non-raking beamformers. It is
computed as the ratio of the power of the desired output signal to the power
of the undesired output signal.
The desired output signal is the output signal due to the desired source,
while the undesired signal is the output signal due to the interferers and noise.

For a desired source at $\vs$, and an interfering source at $\vq$, we can write
\begin{equation}
  \label{eq:sinr}
  \SINR \bydef \frac{\E \abs{\vw^\H \va(\vs) x}^2}{\E \abs{\vw^\H (\va(\vq) z + \vn)}^2} 
   = \sigma_x^2 \frac{\vw^\H \va(\vs) \va(\vs)^\H \vw}{\vw^\H \mK_{nq} \vw},
\end{equation}
where $\mK_{nq}$ is the covariance matrix of the noise and the interference.

It is compelling to pick $\vw$ that maximizes the $\SINR$ \eqref{eq:sinr} \cite{VanVeen:1988hg}.
The maximization can be solved by noting that the rescaling of the beamformer
weights leaves the $\SINR$ unchanged. This means that we can minimize the
denominator subject to numerator being an arbitrary constant. The solution is
given as
\begin{equation}
  \label{eq:snr_beamformer}
  \vw_\text{SINR} = \frac{\mK_{nq}^{-1} \va_s}{\va_s^\H \mK_{nq}^{-1} \va_s}.
\end{equation}

Using the definition \eqref{eq:sinr}, we can derive the $\SINR$ for the Max-SINR beamformer as
\begin{equation}
  \begin{aligned}
    \SINR 
    = \sigma_x^2 \va_s^\H \mK_{nq}^{-1} \va_s.
  \end{aligned}
\end{equation}
Because $\mK_{nq}^{-1}$ is a Hermitian symmetric positive definite matrix, it
has an eigenvalue decomposition as $\mK_{nq}^{-1} = \mU^\H \mat{\Lambda} \mU$,
where $\mU$ is unitary, and $\mat{\Lambda}$ is diagonal with positive entries.
We can write $\va^\H \mK_{nq}^{-1} \va = (\mU \va)^\H \mat{\Lambda} (\mU
\va)$. Because $\norm{\mU \va}^2 = \norm{\va}^2$, and because $\mat{\Lambda}$ is
positive, increasing $\norm{\va}^2$ typically leads to an increased $\SINR$,
although we can construct counterexamples. This will be important when we
discuss the $\SINR$ gain of the Rake-Max-SINR beamformer.


\section{Acoustic Rake Receiver}
\label{sec:acoustic_rake}

In this section, we present several formulations of the ARR. The Rake-DS
beamformer is a straightforward generalization of the conventional DS
beamformer. The one-forcing beamformer implements the idea of steering a fixed
beam power towards every image source, while trying to minimize the
interference and noise. The Rake-Max-SINR and Rake-Max-UDR beamformers
optimize the corresonding performance measures; we show in Section
\ref{sec:numerical_experiments} that the Rake-Max-SINR beamforming performs
best (except, as expected, in terms of $\UDR$).

\subsection{Delay-and-Sum Raking} 
\label{sub:delay_and_sum_raking}

If we had access to every echo separately (\emph{i.e.} not summed with all the
other echoes), we could align them all to maximize the performance gain.
Unfortunately, this is not the case: each microphone picks up the convolution
of speech with the impulse response, which is effectively a sum of overlapping
echoes of the speech signal. If we only wanted to extract the direct path, we
would use the standard DS beamformer
\eqref{eq:delay_and_sum_weights}. To build the Rake-DS receiver, we create a
DS beamformer for every image source, and average the outputs,
\begin{equation}
  \label{eq:rake-ds-sum}
  \frac{1}{K+1} \sum_{k=0}^K \frac{\alpha_k'}{M} \sum_{m=0}^{M-1} y_m \epi{\kappa \norm{\vr_m - \vs_k}},
\end{equation}
where $\alpha_k' \bydef \alpha_k / (4\pi\norm{\vr_m - \vs_k})$. We read out
the beamforming weights from \eqref{eq:rake-ds-sum} as
\begin{equation}
  \vw_\text{R-DS} = \frac{1}{\norm{\sum_k \va(\vs_k)}}\sum_{k=0}^K \va(\vs_k) = \frac{\mA_s \vec{1}}{\norm{\mA_s \vec{1}}},
\end{equation}
where we chose the scaling in analogy with \eqref{eq:delay_and_sum_weights}
(scaling of the weights does not alter the output $\SINR$). It can be seen
that this is just a scaled sum of the steering vectors for each image source.


\subsection{One-Forcing Raking} 
\label{sub:one_forcing_raking}

A different approach, based on intuition, is to design a beamformer that
listens to all $K$ image sources with the same power, and at the same time
minimizes the noise and interference energy:
\begin{equation}
\begin{aligned}
  &\minimize_{\vw \in \C^M}~ \E \abs{\sum_{k=0}^{K'} \vw^\H \va(\vq_k)z + \vw^\H \vn}^2 \\
  &\text{subject to~} \vw^\H \va(\vs_k) = 1, \forall~0 \leq k \leq K.
\end{aligned}
\end{equation}

Alternatively, we may choose to null the interfering source and its image
sources. Both cases are an instance of the standard
linearly-constrained-minimum-variance (LCMV) beamformer \cite{Frost:1972bq}.
Collecting all the steering vectors in a matrix, we can write the constraint
as $\vw^\H \mA_s = \vec{1}^\T$. The solution can be found in closed form as
\begin{equation}
  \vw_\text{OF} = \mK_{nq}^{-1} \mA_s (\mA_s^\H \mK_{nq}^{-1} \mA_s)^{-1} \vec{1}_M.
\end{equation}

A few remarks are in order. First, with $M$ microphones, it does not make
sense to increase $K$ beyond $M$, as this results in more constraints than
degrees of freedom. Second, using this beamformer is a bad idea if there is an
interferer along the ray through the microphone array and any of image
sources.

As with all LCMV beamformers, adding linear constraints uses up degrees of
freedom that could otherwise be used for noise and interference suppression.
It is better to let the ``beamformer decide'' or ``the beamforming procedure
decide'' on how to maximize a well-chosen cost function; one such procedure is
described in the next subsection.


\begin{table*}[htpb]
  \centering

  \caption{Summary of beamformers.}

  \begin{tabular}{@{}llp{0.3\linewidth}@{}}
    \toprule
    {\bf Acronym} & {\bf Description} & {\bf Beamforming Weights} \\
    \midrule
    DS &
      Align delayed copies of signal at the microphone &
      $\vw_{\text{DS}} = {\va(\vs)} / {\norm{\va(\vs)}}$ \medskip \\

    Max-SINR &
      $\text{max.~} {\vw^\H \va_s \va_s^\H \vw} / (\vw^\H \mK_{nq} \vw)$ &
      $\vw_{\text{SINR}} = {\mK_{nq}^{-1} \va_s} / (\va_s^\H \mK_{nq}^{-1} \va_s)$ 
      \medskip \\

    Rake-DS &
      Weighted average of DS beamformers over image sources &
      $\vw_{\text{R-DS}} = {\mA_s \vec{1}} / \norm{\mA_s \vec{1}}$
      \medskip \\

    Rake-OF &
      $\text{min.~}  \E \abs{\sum_{k=0}^{K'} \vw^\H \va(\vq_k)z + \vw^\H \vn}^2, 
       \text{ s.t.~} \vw^\H \mA_s = \vec{1}^\T
      $ &
      $\vw_\text{OF} = \mK_{nq}^{-1} \mA_s (\mA_s^\H \mK_{nq}^{-1} \mA_s)^{-1} \vec{1}_M$
      \medskip \\

    Rake-Max-SINR &
      $\text{max.~} {\E \abs{\sum_{k=0}^K \vw^\H \va(\vs_k)x}^2} \big/ \ \E \abs{\sum_{k=0}^{K'} \vw^\H \va(\vq_k)z + \vw^\H \vn}^2$ &
      $\vw_{\text{R-SINR}} = {\mK_{nq}^{-1} \mA_s \vec{1}} / (\vec{1}^\H \mA_s^\H \mK_{nq}^{-1} \mA_s \vec{1})$
      \medskip \\

    Rake-Max-UDR &
      $\text{max.~} {\E \sum_{k=0}^K \abs{\vw^\H \va(\vs_k)x}^2} \big/ \ \E \abs{\vw^\H \sum_{k=0}^{K'} \va(\vq_k)z + \vw^\H \vn}^2$ &
      $\vw_{\text{R-UDR}} = \mC^{-1} \wt{\vw}_\text{max}((\mC^{-1})^\H \mA_s \mA_s^\H \mC^{-1})$
      \\
    \bottomrule
  \end{tabular}
  \label{tab:beamformers}
\end{table*}

\subsection{Max-SINR Raking} 
\label{sub:max_sinr_raking}

The main workhorse of the paper is the Rake-Max-SINR. We compute the weights so
as to maximize the $\SINR$, taking into account the echoes of the desired signal,
and the echoes of the interfering signal,

\begin{equation}
  \label{eq:rake-maxsinr}
  \maximize_{\vw \in \C^M}~ \frac{\E \abs{\sum_{k=0}^K \vw^\H \va(\vs_k)x}^2}{\E \abs{\sum_{k=0}^{K'} \vw^\H \va(\vq_k)z + \vw^\H \vn}^2}.
\end{equation}

The logic behind this expression can be summarized as follows: we present the
beamforming procedure with a set of good sources, whose influence we aim to
maximize at the output, and with a set of bad sources, whose power we try to
minimize at the output. Interestingly, this leads to the standard Max-SINR
beamformer with a structured steering vector and covariance matrix. Define the
combined noise and interference covariance matrix as
\begin{equation}
  \mK_{nq} \bydef \mK_n + \sigma_z^2 \left( \sum_{k=0}^{K^\prime} \va(\vq_k)\right) \left(\sum_{k=0}^{K^\prime} \va(\vq_k) \right)^\H,
\end{equation}
where $\mK_n$ is the covariance matrix of the noise term, and $\sigma_z^2$ is
the power of the interferer at a particular frequency.

Then the solution to \eqref{eq:rake-maxsinr} is given as

\begin{equation}
  \label{eq:rake-maxsinr-weights}
  \vw_{\text{R-SINR}} = \frac{\mK_{nq}^{-1} \mA_s \vec{1}}{\vec{1}^\H \mA_s^\H \mK_{nq}^{-1} \mA_s \vec{1}}.
\end{equation}

It is interesting to note that when $\mK_{nq} = \sigma^2 \mI_M$ (\emph{e.g.} no
interferers and iid noise), the Rake-Max-SINR beamformer reduces to $\mA_s
\vec{1} / \norm{\mA_s \vec{1}}$, which is exactly the Rake-DS
beamformer. This is analogous to the non-raking DS beamformer
\eqref{eq:delay_and_sum_weights}.


\subsection{Max-UDR Raking} 
\label{sub:max_esnr_raking}

Finally, it is interesting to investigate what happens if we choose the
weights that optimize the perceptually motivated $\UDR$ \cite{Bradley:2003jf,
Lochner:1964do}. The $\UDR$ expresses the fact that adding early reflections
(up to 50 ms in the RIR) is as good as adding the energy to the direct sound
in terms of speech intelligibility. The useful signal is a
\emph{coherent} sum of the direct and early reflected speech energy, so that

\begin{equation}
  \label{eq:esnr}
  \UDR = \frac{\E \sum_{k=0}^K \abs{\vw^\H \va(\vs_k) x}^2}{\E \abs{\sum_{k=0}^{K'} \vw^\H \va(\vq_k)z + \vw^\H \vn}^2},
\end{equation}
In applications $K$ is rarely large enough to cover all the reflections
occurring within 50 ms, simply because it is too optimistic to assume we know
all the corresponding image sources. Therefore, \eqref{eq:esnr} typically
underestimates the $\UDR$. 

Alas, because \eqref{eq:esnr} is specified in the frequency domain, it is
challenging to control whether the reflections in the numerator arrive before
or after the direct sound. Nevertheless, it is interesting to analyze it as it
provides a basis for future work on time-domain raking formulations, and a
meaningful evaluation of the raking algorithms presented in this paper.

To compute the Rake-Max-UDR weights, we solve the following program

\begin{equation}
  \label{eq:rake-maxperception}
  \maximize_{\vw \in \C^M} \frac{\E \sum_{k=0}^K \abs{\vw^\H \va(\vs_k)x}^2}{\E \abs{\vw^\H \sum_{k=0}^{K'} \va(\vq_k)z + \vw^\H \vn}^2}.
\end{equation}
This amounts to maximizing a particular generalized Rayleigh quotient,
\begin{equation}
  \label{eq:gen_rayleigh_udr}
  \frac{\vw^\H \mA_s \mA_s^\H \vw}{\vw^\H \mK_{nq} \vw}.
\end{equation}
Assuming that $\mK_{nq}$ has a Cholesky decomposition as $\mK_{nq} = \mC^\H
\mC$ we can rewrite the quotient \eqref{eq:gen_rayleigh_udr} as
\begin{equation}
  \frac{\wt{\vw}^\H (\mC^{-1})^\H \mA_s \mA_s^\H \mC^{-1} \wt{\vw}}{\wt{\vw}^\H \wt{\vw}},
\end{equation}
where $\wt{\vw} \bydef \mC \vw$. The maximum of this expression is
\begin{equation}
  \lambda_\text{max}((\mC^{-1})^\H \mA_s \mA_s^\H \mC^{-1}),
\end{equation}
where $\lambda_\text{max}(\, \cdot \, )$ denotes the largest eigenvalue of the
matrix in the argument. The maximum is achieved by the corresponding
eigenvector $\wt{\vw}_\text{max}$. Then the optimal weights are given as
\begin{equation}
  \vw_\text{R-UDR} = \mC^{-1} \wt{\vw}_\text{max}.
\end{equation}

\subsection{SINR Gain from Raking} 
\label{sec:noise_suppression}

Intuitively, if we have multiple sources of the desired signal scattered in
space, and we account for it in the design, we should do at least as well as
when we ignore the image sources. Let us see how large the gain can be for the Rake-Max-SINR beamformer. We have that
\begin{equation}
  \SINR = (\mA_s \vec{1})^\H \mK_{nq}^{-1} (\mA_s \vec{1}).
\end{equation}
Intuitively, the larger the norm of $\mA_s \vec{1}$, the better the $\SINR$ (as
$\mK_{nq}$ is positive). To explicitly see if there is any gain in using the
acoustic rake receiver, we should compare the standard Max-SINR beamformer with the Rake-Max-SINR, \emph{e.g.}, we should evaluate
\begin{equation}
  \label{eq:gain_ratio}
  \frac{\big( \sum_k \va(\vs_k) \big)^\H \mK_{nq}^{-1} \big(\sum_k \va(\vs_k)
  \big)}{\va(\vs_0)^\H
  \mK_{nq}^{-1} \va(\vs_0)}.
\end{equation}
One possible interpretation of \eqref{eq:gain_ratio} is that we ask whether
the steering vectors $\va(\vs_k)$ sum coherently or they cancel out. 

\rev{To answer this, assume that $\vs_k$, $0 \leq k \leq K$ are the desired
sources (true and image), and let $\beta \bydef \sum_{k=1}^K
(\alpha_k/\alpha_0)^2$, where $\alpha_k$ is the strength of the source $\vs_k$
received by the array. Then 
\begin{equation}
\label{eq:snr_gain_raking}
\E \left(\left\|\sum_{k=0}^K \va(\vs_k)\right\|^2 \right) \approx (1+\beta)\,
\E(\norm{\va(\vs_0)}^2),
\end{equation}
that is, we can expect an increase in the output $\SINR$ approximately by a
factor of $(1+\beta)$ when using the Rake-Max-SINR beamformer. This statement
is made precise in Theorem \ref{thm:SNR-gain} in the Appendix. It holds when
$\mK_{nq}$ has eigenvalues of similar magnitude, which is typically not the
case in the presence of interferers. However, we show in Section
\ref{sec:numerical_experiments} that with interferers present, the gains actually
increase.}

  A couple of remarks are in order:
\begin{enumerate}
  \item This result is in expectation; \rev{it says that on average, the $\SINR$
  will increase by a factor of $(1+\beta)$. In the worst case, the steering
  vectors $\va(\vs_k)$ can even cancel out so that the $\SINR$ decreases.} But the
  numerical experiments suggest that this is very rare in practice, and we can
  on the other hand observe large gains.
  \item We see that summing the phasors in $a_m(\vs_k)$ behaves as a
  two-dimensional random walk. It is known that the root-mean-square distance of a 2D
  random walk from the origin after $n$ steps is $\sqrt{n}$
  \cite{mccrea1940random}.
  \item Due to the far-field assumption in Theorem \ref{thm:SNR-gain}, the
   attenuations $\alpha_k$ are assumed to be independent of the microphones;
   in reality they do depend on the source locations. However, they also
   depend on a number of additional factors, for example wall attenuations
   and radiation patterns of the sources. Therefore, for simplicity, we
   consider them to be independent. One can verify that this assumption does
   not change the described trend.
\end{enumerate}

It is reassuring to observe the behavior suggested by
\eqref{eq:snr_gain_raking} in practice. Fig.~\ref{fig:snr_gain} shows the
comparison of the prediction by Theorem~\ref{thm:SNR-gain} with the $\SNR$
gains observed in simulated rooms. In this case, we are comparing the pure
$\SNR$ gain for white noise, without interferers. To generate
Fig.~\ref{fig:snr_gain}, we randomized the location of the source inside the
rectangular room. For simplicity we fixed the signal power as received by the
microphones to the same value for all the image sources, so that the expected
gain is $K+1$ in the linear scale. The curves agree near-perfectly with the
prediction of Theorem \ref{thm:SNR-gain}.

\begin{figure}
\centering
\includegraphics[width=3.5in]{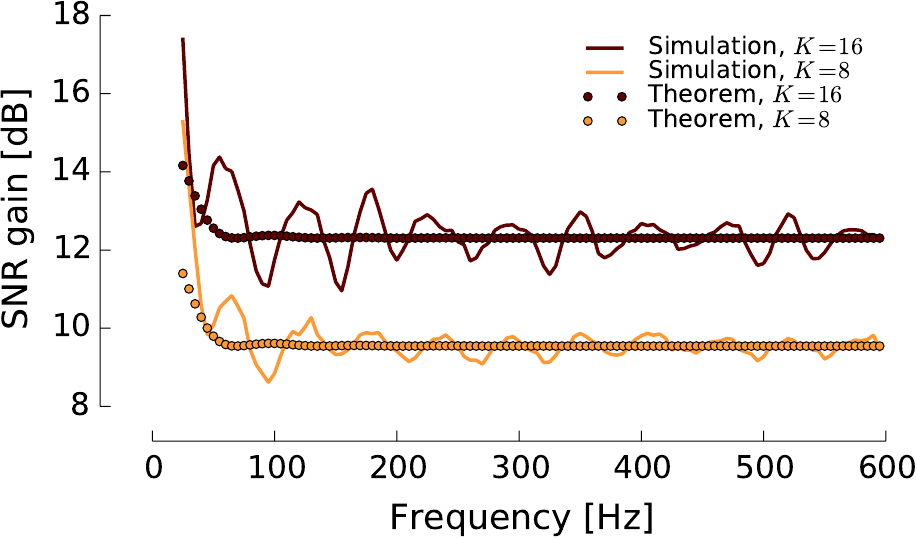}
\caption{Comparison of the simulated $\SNR$ gains and the theoretical
prediction from Theorem \ref{thm:SNR-gain} for $K = 8$, and $K = 16$. The
theoretical prediction of the gain is $10\log_{10}(8+1) \approx 9.54$ for $K=8$,
and $10\log_{10}(16+1) \approx 12.30$ for $K=16$.}
\label{fig:snr_gain}
\end{figure}

\section{Finding and tracking the echoes} 
\label{sec:finding_and_tracking_the_echoes}

\begin{figure}
\centering
\includegraphics[width=3.5in]{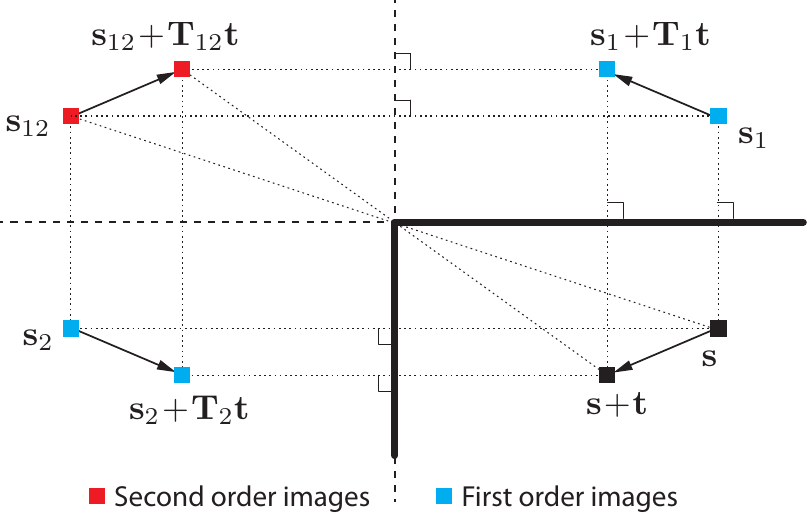}
\caption{Illustration of image source tracking in rectangular geometries.}
\label{fig:tracking}
\end{figure}

Thus far we assumed that the locations of the image sources are known. In this
section we briefly describe some methods to localize them when they are
\emph{a priori} unknown. We assume that we can localize the true source, or at
least one image source. Combined with the knowledge of the room geometry, this
suffices to find the locations of other image sources \cite{Ocal:2014wx}.

\subsection{Known room geometry} 
\label{sub:known_room_geometry}

In many cases, for example for fixed deployments, the room geometry is known.
This knowledge could be obtained at the time of the deployment, or from
blueprints. In most indoor environments, we  encounter a large number of
planar reflectors. These reflectors  correspond to image sources. With
reference to Fig.
\ref{fig:tracking}, we can easily compute the image source locations
\cite{Allen:1979ua} (we note that the image source model is not limited to
right angle geometries \cite{Borish:1984uu}).

Suppose that the real source is located at $\vs$. Then the image source with
respect to wall $i$ is computed as,
\begin{equation}
\label{eq:image-sources}
\vec{im}_i(\vs) = \vs + 2 \inprod{\vp_i - \vs, \vn_i} \vn_i,
\end{equation}
where $i$ indexes the wall, $\vn_i$ is the outward normal associated with the
$i$th wall, and $\vp_i$ is any point belonging to the $i$th wall. Analogously,
we compute image sources corresponding to higher order reflections,
\begin{equation}
  \vec{im}_j(\vec{im}_i(\vs)) = \vec{im}_i(\vs) + 2 \inprod{\vp_j - \vec{im}_i(\vs),
    \vn_j} \vn_j.
\end{equation}
The above expressions are valid regardless of the dimensionality, concretely
in 2D and 3D.


\subsection{Acoustic geometry estimation} 
\label{sub:acoustic_geometry_estimation}

When the room geometry is not known, it is possible to estimate it using the
same array that we use for beamforming. Recently a number of different methods
appeared in the literature that propose to use sound to estimate the shape
of a room. For example, in \cite{Ribeiro:2012ge} the authors use a dictionary
of wall impulse responses recorded with a particular array. In
\cite{Antonacci:2012hc} the authors use tools from projective geometry
together with the Hough transform to estimate the room geometry. In
\cite{Dokmanic:2013dz} the authors derive an \emph{echo sorting} mechanism
that finds the image sources, from which the room geometry is then derived.


\subsection{Without Estimating the Room Geometry} 
\label{sub:not_estimating_the_room_geometry}

To design an ARR, we do not really need to know how the room looks like; we
only need to know where the major echoes are coming from. One possible
approach is to locate the image sources in the initial calibration phase, and
then track their movement by tracking the true source.

We propose a tracking rule that leverages the knowledge of the displacement of
the true source. Again with reference to Fig. \ref{fig:tracking}, we can state
the following simple proposition.

\begin{prop}
  \label{prop:is_tracking}
  Suppose that the room has only right angles so that the walls are parallel
  with the coordinate axes. Let the source move from $\vs$ to $\vs +
  \vt$. Then any image source $\vs_k$, moves to a point given by
  \begin{equation}
    \vs_k + \mT \vt,
  \end{equation}
  where $\mT = \diag(\pm 1, \mp 1)$ for odd generations, and $\mT = \pm \mI_2$
  for even generations.
\end{prop}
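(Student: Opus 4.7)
The plan is to reduce the proof to a straightforward composition of affine reflections. I would start from the image source formula \eqref{eq:image-sources}, but specialize it to the right-angle setting: every wall normal $\vn_i$ is $\pm \ve_1$ or $\pm \ve_2$, so $\inprod{\vp_i - \vs, \vn_i}\vn_i$ reduces to flipping a single coordinate of $\vs$ about the coordinate of the wall. Concretely, a reflection across a vertical wall acts as $\vs \mapsto \mR_1 \vs + \vc_1$ with $\mR_1 = \diag(-1, 1)$, and across a horizontal wall as $\vs \mapsto \mR_2 \vs + \vc_2$ with $\mR_2 = \diag(1, -1)$, where the translation vector $\vc_j$ depends on the wall position but not on $\vs$.

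Next, I would represent the $k$th image source as a composition of such single-wall reflections. Since an $n$th generation image source corresponds to a sequence $i_1, i_2, \ldots, i_n$ of wall indices, the composed map is itself affine:
\begin{equation*}
\vs_k \;=\; \mT\,\vs + \vc,
\qquad \mT \bydef \mR_{i_n} \cdots \mR_{i_1},
\end{equation*}
with $\vc$ a vector that depends on wall locations but not on $\vs$. Replacing $\vs$ by $\vs + \vt$ immediately gives $\vs_k \mapsto \vs_k + \mT\vt$, which is the desired conclusion — up to identifying the structure of $\mT$.

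Finally, I would analyze $\mT$ by parity counting. Each $\mR_{i_j}$ is diagonal with entries in $\{-1, +1\}$, so $\mT$ is diagonal with entries $((-1)^{n_1}, (-1)^{n_2})$, where $n_1$ and $n_2$ count reflections across vertical and horizontal walls respectively, and $n = n_1 + n_2$ is the generation. If $n$ is even, $n_1$ and $n_2$ share parity, so $\mT = \pm\mI_2$; if $n$ is odd, they have opposite parity, so $\mT = \diag(\pm 1, \mp 1)$. This exhausts both cases stated in the proposition.

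The argument is essentially a bookkeeping exercise, so I do not anticipate a real obstacle; the only subtlety is being explicit that the translation part $\vc$ accumulated by the composition is independent of the moving source, so that the displacement of $\vs_k$ is exactly the linear part $\mT$ applied to $\vt$. Once that observation is in place, the rest is a parity argument.
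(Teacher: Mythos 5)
Your proposal is correct and follows essentially the same route as the paper's proof, which simply observes that the image-source displacement is the true source's displacement passed through a series of reflections, each a diagonal $\pm 1$ matrix of determinant $-1$. You make explicit what the paper leaves to the figure — the affine decomposition, the source-independence of the translation part, and the parity count — but there is no difference in the underlying argument.
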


\begin{proof}
  The proof follows directly from the figure. The displacement of the image
  source is the same as the displacement of the true source, passed through a
  series of reflections. Reflection matrices are diagonal matrices with $\pm
  1$ on the diagonal, and determinant equal to $-1$, hence the result.
\end{proof}

The usefulness of this proposition is that it gives us a tool to track the
image sources even when we do not know the room geometry (as long as it has
right angles). A possible use scenario is to start with a calibration
procedure with a controlled source, and perform the echo sorting to find
multiple image sources. Then if possible, we assign to each image source a
generation (this is in fact a by-product of echo sorting), or we try different
hypotheses using Proposition \ref{prop:is_tracking}, and choose the one that
maximizes the output $\SINR$.



\section{Numerical Experiments} 
\label{sec:numerical_experiments}

\begin{figure}
  \centering
  \includegraphics[width=3.5in]{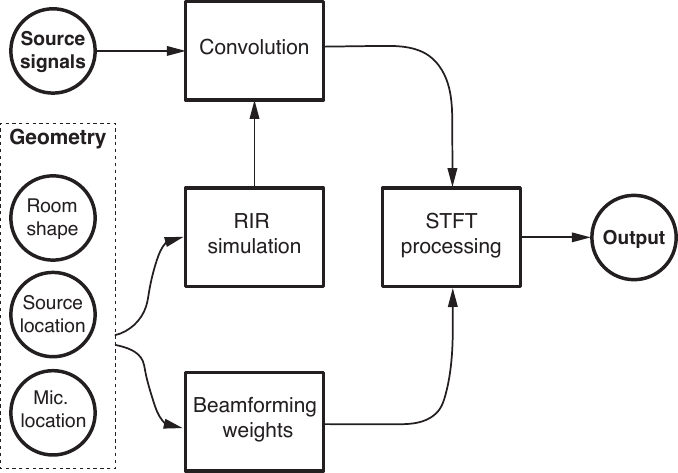}
  \caption{Block diagram of the simulation setup used for numerical experiments.}
  \label{fig:simulation_setup}
\end{figure}

In this section, we validate the described theoretical results through
numerical experiments. First, we analyze the beampatterns produced by the ARR;
second, we evaluate the $\SINR$ for various beamformers as a function of the
number of image sources used in weight computation; and third, we evaluate the
PESQ metric \cite{Rix:2001bv}. Finally, we show spectrograms that reveal
visually the improved interferer and noise suppression achieved by the ARR.

\subsection{Simulation Setup} 
\label{sec:simulation_setup}

We use a simple room acoustic framework written in Python, that relies on
Numpy and Scipy for matrix computations \cite{Oliphant:2007dm}. We limit
ourselves in this paper to 2D geometry and rectangular rooms. In all
experiments, the sampling frequency $F_s$ was set to 8 kHz. An overview of the
simulation setup is shown in Fig.~\ref{fig:simulation_setup}.

\begin{figure*}
  \centering
  \includegraphics[width=7in]{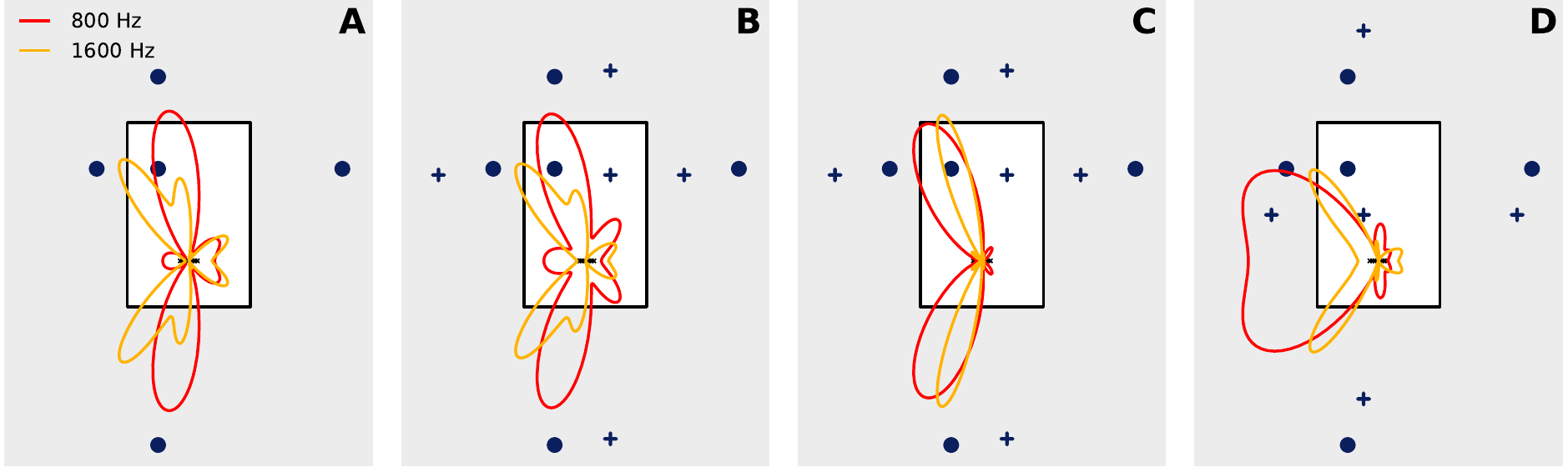}
  \caption{Beam patterns in different scenarios. The rectangular room is 4 by 6
    metres and contains a source of interest ($\CIRCLE$) and an interferer
    ({\bf +}) ((B), (C), (D) only). The first order
    image sources are also displayed. The weight computation of the beamformer
    includes the direct source and the first order image sources of both
    desired source and interferer (when applicable). (A) Rake-Max-SINR, no
    interferer, (B) Rake-Max-SINR, one interferer, (C) Rake-Max-UDR, one
    interferer, (D) Rake-Max-SINR, interferer is in direct path.} 
  \label{fig:beam_scenarios} 
\end{figure*}

Starting from the room geometry and the positions of the sources and
microphones, we first compute the locations of all images sources up to ten
generations.  The reflectivity of the walls is fixed to 0.9. The RIR between
the source $\vs_0$ and the microphone $\vr_m$ is convolved with an ideal
low-pass filter in the continuous domain and then sampled at the sampling
frequency $F_s$,
\begin{equation}
  \label{eq:rir_lowpass}
  \wt{a}_m(\vs_0)[n] = 
  \sum_{k=0}^K \frac{\alpha_k}{4\pi\|\vr_m-\vs_k\|}\,\sinc\left(n - F_s\frac{\|\vr_m-\vs_k\|}{c}\right),
\end{equation}
where $K$ is the number of image sources considered. We choose the limits of
$n$ such that the cardinal sine in \eqref{eq:rir_lowpass} decays sufficiently
to avoid artifacts. The discrete signals from all sound sources are then
convolved with their respective RIRs, and added together to obtain the $m$th
microphone's signal.


The beamforming weights are computed in the frequency domain. We use the
discrete-time STFT processing with a frame size of $L=4096$ samples, 50\%
overlap and zero padding on both sides of the signal by $L/2$. A real fast
Fourier transform of size $2L$ and a Hann window are used in the analysis. By
exploiting the conjugate symmetry of the real FFT we only need to compute
$L+1$ beamforming weights, one for every positive frequency bin. The length
$L$ is dictated by the length of the beamforming filters in the time-domain
and was set empirically to avoid any aliasing in the filter responses. The
output signal was synthesized using the conventional overlap-add method
\cite{Shynk:1992vh}.


\subsection{Results} 
\label{sec:results}

\subsubsection{Beampatterns}

We first inspect the beampatterns produced by the Rake-Max-SINR and
Rake-Max-UDR beamformers for different source-interferer placements.  We
consider a $4 \text{m} \times 6 \text{m}$ rectangular room with a source of
interest at $(1 \text{m},4.5 \text{m})$ and a linear microphone array centered
at $(2\text{m},1.5\text{m})$, parallel to the $x$-axis. Spacing between the
microphones was set to 8cm. In Fig.~\ref{fig:beam_scenarios}, we show the
beampatterns for four different configurations of the source and the
interferer. We consider a scenario without an interferer, one with an
interferer placed favorably at $(2.8\text{m},4.3\text{m})$, and finally one
where the interferer is placed half-way between the desired source and the
array, at $(1.5\text{m},3\text{m})$.

The last scenario is the least favorable. Interestingly, we can observe that
the Rake-Max-SINR beampattern adjusts by completely ignoring the direct path,
and steering the beam towards the echoes of the source of interest. This is
validating the intuition that we can ``hear behind an interferer by listening
for the echoes''. Note that such a pattern cannot be achieved by a beamformer
that only takes into account the direct path. We further note that, while the
beampatterns only show the magnitude of the beamformer's response, the phase
plays an important role with multiple sources present.

\subsubsection{SINR Gains from Raking}

In the experiments in this subsection, we set the power of the desired source
and of the interferer to be equal, $\sigma_x^2 = \sigma_z^2 = 1$. The noise
covariance matrix is set to $10^{-3} \cdot \mI_M$. We use a circular array of
$M=12$ microphones with a diameter of 30 cm, and randomize the position of the
desired source and the interferer inside the room. The resulting curves show
median performance out of 20000 runs.


Fig. \ref{fig:SINR_vs_K} shows output $\SINR$ for different beamformers. The
one-forcing beamformer is left out because it performs poorly in terms of
$\SINR$, as predicted earlier. Clearly, the Rake-Max-SINR beamformer
outperforms all others. The output $\SINR$ for beamformers using only the
direct path (Max-SINR and DS) remains approximately constant. The $\UDR$ is
plotted against the number of image sources for various beamformers in
Fig.~\ref{fig:UDR_vs_K}. The Rake-Max-UDR beamformer performs well in terms of
the two measures, its output is perceptually unpleasing due to audible
pre-echoes; in informal listening tests, the Rake-Max-SINR beamformer did not
produce such artifacts. It is interesting to note that the Rake-Max-SINR also
performs well in terms of the $\UDR$. Similar $\SINR$ gains to those in
Fig.~\ref{fig:SINR_vs_K} are observed in Fig.~\ref{fig:SINR_vs_freq} over a
range of frequencies. It is therefore justified to extrapolate the results at
one frequency in Fig.~\ref{fig:SINR_vs_K} to the wideband $\SINR$.

\rev{
\subsubsection{Evaluation of Speech Quality}

We complement the informal listening tests and the evaluation of $\SINR$ and
$\UDR$ with extensive simulations to asses the improvement in speech quality
achieved by acoustic raking. We simulate a room with two sources---a desired
source and an interferer---and compare the outputs of the Rake-DS,
Rake-Max-SINR, and Rake-Max-UDR as a function of the number of image sources
used to design the beamformers.

The same number of image sources is used for the target and interferer
($K=K^\prime$). The performance metric used is the PESQ
\cite{Rix:2001bv}. In particular, we use the reference implementation
described by the the ITU P.862 Amendment 2
\cite{P.862a2}. PESQ compares the reference signal with the degraded signal and
predicts the perceptual quality of the latter as it would be measured by the
mean opinion score (MOS) value, on a scale from $1$ to $4.5$.

\begin{figure}
\centering
\includegraphics[width=3.5in]{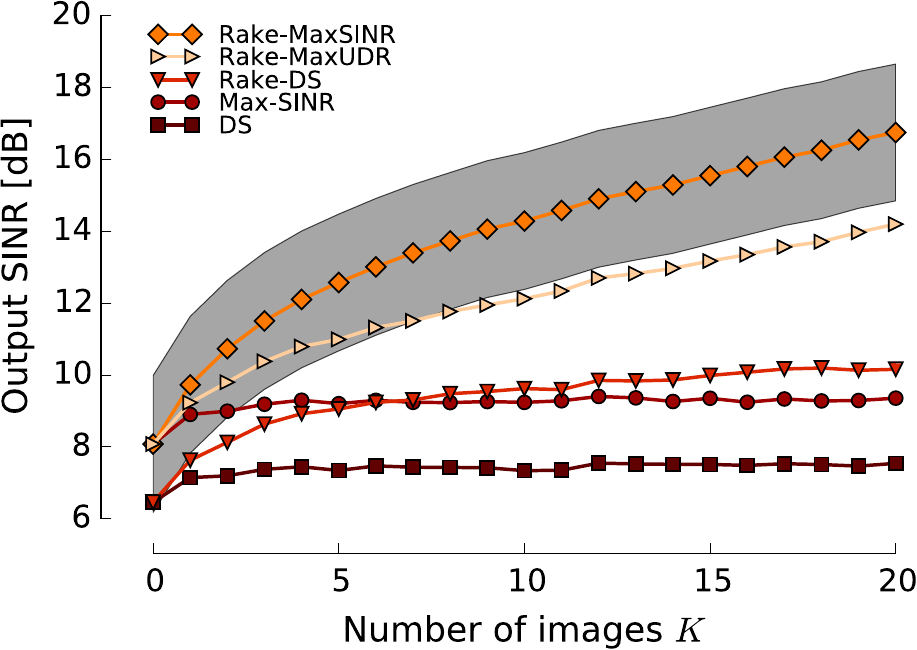}
\caption{Median output $\SINR$ plotted against the number of image sources
used in the design for different beamformers, at a frequency $f = 1$~kHz. The
shaded area contains the Rake-Max-SINR output $\SINR$s for 50\% of the 20000
Monte Carlo runs.}
\label{fig:SINR_vs_K}
\end{figure}

\begin{figure}
\centering
\includegraphics[width=3.5in]{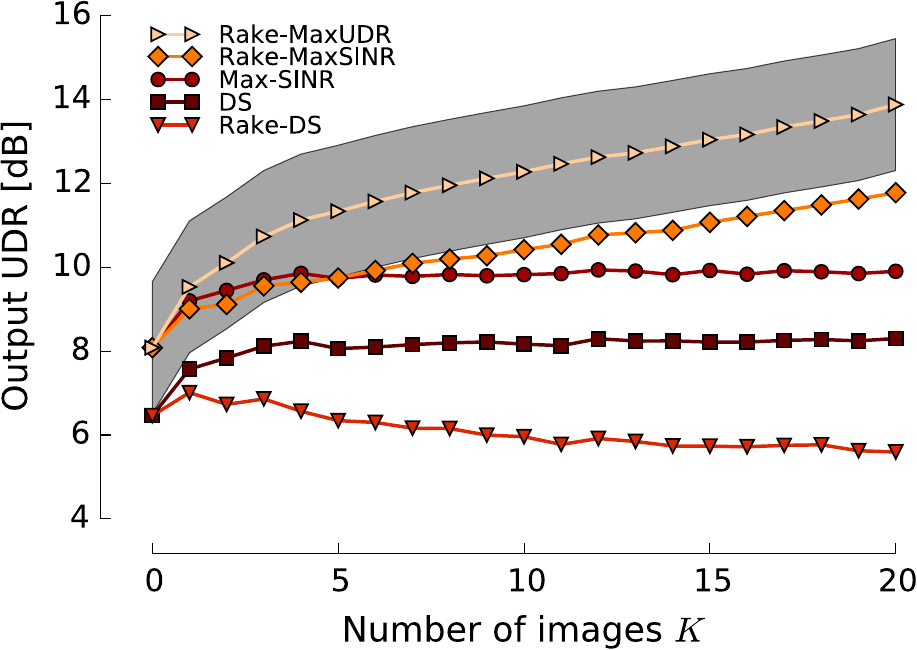}
\caption{Median output $\UDR$ plotted against the number of image sources used in
the design for different beamformers, at a frequency $f = 1$~kHz. The shaded
area contains the Rake-Max-UDR output $\UDR$s for 50\% of the 20000 Monte
Carlo runs.}
\label{fig:UDR_vs_K}
\end{figure}

We consider the same room and microphone array setting as before (see
Fig.~\ref{fig:beam_scenarios}A). The desired and the interfering sources are
placed uniformly at random in a rectangular area with lower left corner at
$(1\text{m},2.5\text{m})$ and upper right corner at $(3\text{m},5\text{m})$.
To limit the experimental variation, the speech samples attributed to the
sources are fixed throughout the simulation. The two sources start reproducing
speech at the same time and approximately overlap for the total duration of
the speech samples. The signals are normalized to have the same power at the
source. We added white Gaussian noise to the microphone signals, with power chosen
so that the $\SNR$ of the direct sound for the desired source is 20 dB at the
center of the microphone array. All signals are high-pass filtered with a
cut-off frequency of 300 Hz. The reference for all PESQ results is the direct
path of the target signal as measured at the center of the array
$(2\text{m},1.5\text{m})$.

\begin{figure}[h!]
\centering
\includegraphics[width=3.5in]{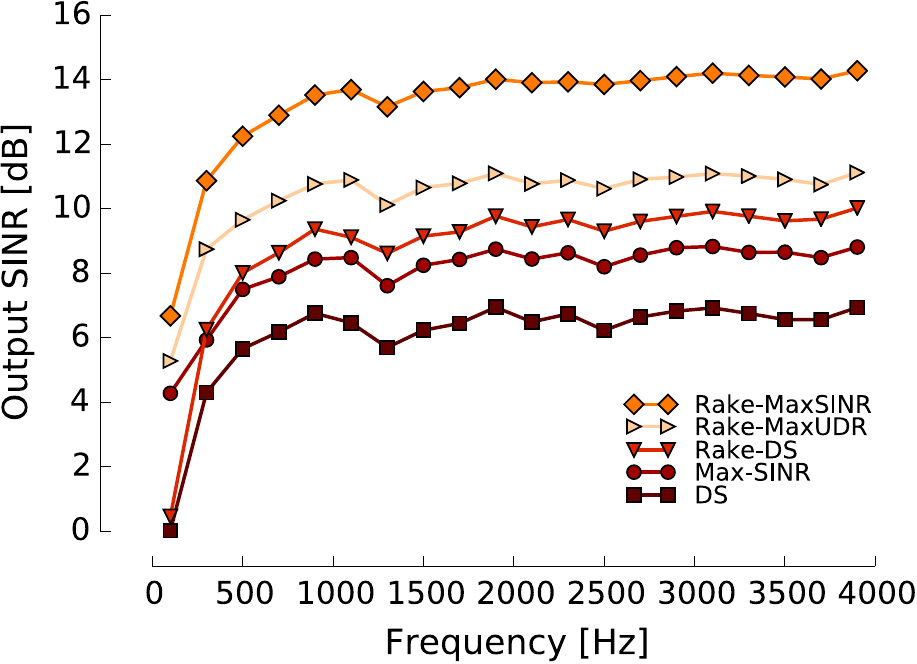}
\caption{Output $\SINR$ as a function of frequency for different beamformers and
$K=K'=10$. The curves show the average of 20000 runs, with averaging performed
in the dB domain.}
\label{fig:SINR_vs_freq}
\end{figure}

The median PESQ measure of 10000 Monte Carlo runs, given in raw MOS, is shown
in Fig.~\ref{fig:perceptual_quality}. The median PESQ of the degraded signal
measured at the center of the array before processing was found to be 1.6 raw
MOS. When only the direct sound is used (\emph{i.e.}, $K=0$), all three
beamformers yield the same improvement of about 0.2 raw MOS. We observe that
Rake-DS is marginally better than the other beamformers. Using any number of
echoes in addition to the direct sound results in larger MOS for all
beamformers. When more than one image source is used, the Rake-Max-SINR
beamformer always yields the largest MOS, with up to 0.5 MOS gain when using
10 images sources.

It is worth mentioning that in the beamformer design, we do not assume that we
know the spectrum of the source or the interferer---we design as if it was
flat. Thus the interferer acts as a strong source of colored, spatially
correlated, non-stationary noise, spectrally mismatched with the designed
beamformer. There is another source of model mismatch: while the RIRs were
computed using hundreds of image sources, we use only up to ten to design the
beamformers.}

\subsubsection{Spectrograms and Sounds Samples}

Finally, we present the spectrograms for a scenario where we want to focus on
a singer in the presence of interfering speech. We consider the same room,
source, interferer, and microphone array geometry as in
Fig.~\ref{fig:beam_scenarios}B.

The source signal is a snippet by a female opera singer
(Fig.~\ref{fig:spectrograms}A), with strongly pronounced harmonics; the
interfering signal is a male speech extract. The two signals are normalized to
have unit maximum amplitude. We add white Gaussian noise to the microphone
signals with power such that the SNR of the direct sound of the desired source
is 20 dB at the center of the microphone array. All signals are high-pass
filtered with a cut-off frequency of 300 Hz. The Rake-Max-SINR beamformer
weights are computed using the direct source and three generations of image
sources for both the desired sound source (singing) and the interferer
(speech).

The output of the conventional Max-SINR beamformer
(Fig.~\ref{fig:spectrograms}C) is compared to that of the Rake-Max-SINR
(Fig.~\ref{fig:spectrograms}D). We can observe from the spectrogram that the
Rake-Max-SINR reduces very effectively the power of the interfering signal at
all frequencies, but particularly in the mid to high range. This is true even
when the interferer overlaps significantly with the desired signal. Informal
listening tests confirm that the Rake-Max-SINR maintains high quality of the
desired signal while strongly reducing the interference. The Rake-Max-UDR
beamformer provides good interference suppression, but it produces audible
pre-echoes that render it unsuitable for speech processing applications. The
sound clips can be found online along the code.


\section{Conclusion} 
\label{sec:conclusion}

\begin{figure}[t!]
\centering
\includegraphics[width=3.5in]{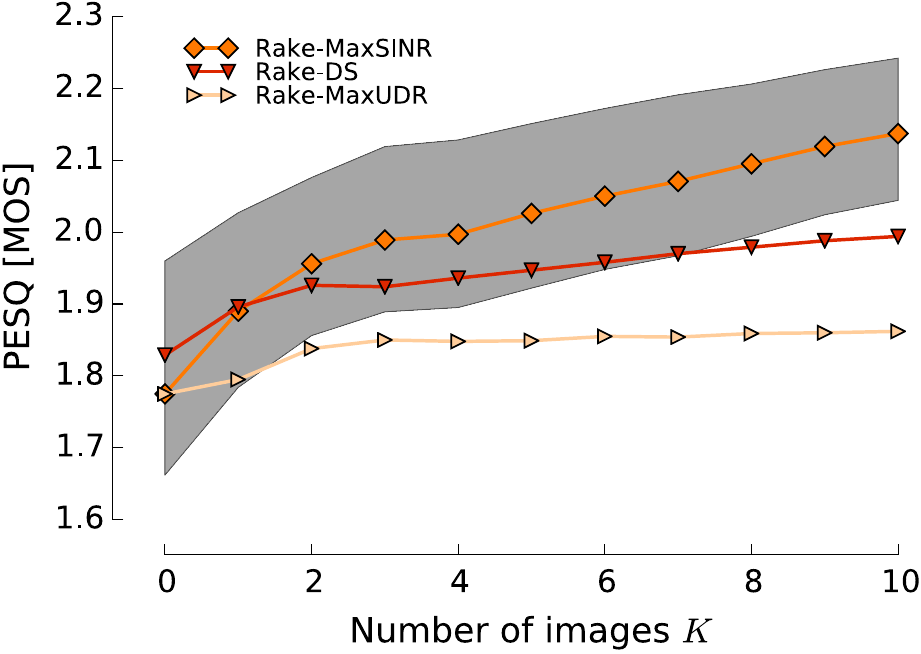}
\caption{Perceptual quality in MOS, evaluated using PESQ, as a function of the
  number of image sources used $K$. The lower limit of the ordinates is set to
  the median MOS of the degraded signal before processing, as measured at
  center of the array. The shaded area contains the Rake-Max-SINR output for
  50\% of the 10000 Monte Carlo runs.}
\label{fig:perceptual_quality}
\end{figure}

We investigated the concept of acoustic rake receivers---beamformers that use
echoes. Unlike earlier related work, we presented optimal formulations that
outperform the delay-and-sum style approaches by a large margin. This is
especially true in the presence of interferers, hence the title ``Raking the
Cocktail Party''. We demonstrate theoretically that the ARRs improve the $\SINR$,
and the numerical simulations agree well with these predictions.

Beyond theoretical and numerical evaluations of the performance measures, we
demonstrated in informal listening tests the improved interference suppression
by the ARR.  Moreover, extensive simulation determined that the ARR improves
the subjective quality, as predicted by PESQ, proportionnaly to the number of
image sources used. A particularly illustrative example is when the interferer
is occluding the desired source---the optimal ARR takes care of this simply by
listening to the echoes.

\begin{figure*}
  \centering
  \includegraphics[width=7in]{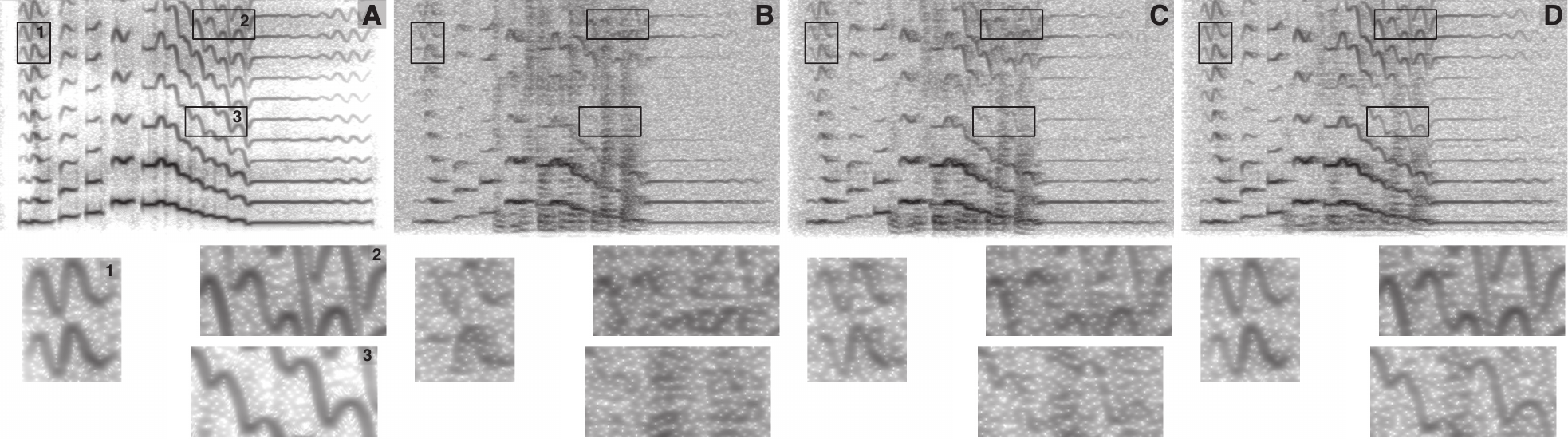}
  \caption{Comparison of the conventional Max-SINR and Rake-Max-SINR beamformer
    on a real speech sample. Spectrograms of (A) clean signal of
    interest, (B) signal corrupted by an interferer and additive white
    Gaussian noise at the microphone input, outputs of (C) conventional
    Max-SINR and (D) Rake-Max-SINR beamformers. Time naturally 
    goes from left to right, and frequency increases from zero at the 
    bottom up to $F_s/2$. To highlight the improvement of Rake-Max-SINR over
    Max-SINR, we blow-up three parts of the spectrograms in the lower part of
    the figure. The boxes and the corresponding part of the original spectrogram
    are numbered in (A). The numbering is the same but omitted in the rest of
    the figure for clarity.}
  \label{fig:spectrograms}
\end{figure*}

Perhaps the most important aspect of ongoing work is the design of robust
formulations of ARRs. This may involve various heuristics, as well as
combinatorial optimization due to the discrete nature of image sources.
\rev{We expect that the raking beamformers described in this paper inherit the
robustness properties of their classical counterparts. For example, the Rake-%
DS beamformer is likely to be more robust to array calibration errors than the
Rake-Max-SINR beamformer. Furthermore, we expect that taking the image source
perspective makes various ARRs more robust to errors in source locations than
the schemes that assume the knowledge of the RIR.}

\rev{Another line of ongoing work investigates the time-domain formulations of the
ARRs, with some initial results already available \cite{Scheibler:2015xx}.
Time-domain formulations offer better control over whether the echoes appear
before or after the direct sound. This provides a more natural framework for
optimizing perceptually motivated performance measures, such as $\UDR$.
}






\section*{Appendix: Theorem \ref{thm:SNR-gain}}

We note that the theorem is is stated for a linear array, but the described
behavior is universal.

\begin{thm}
  \label{thm:SNR-gain} Assume that there are $K+1$ sources located at $\vs_k =
  r_k [\cos
  \theta_k ~ \sin \theta_k]^\T$ where $\theta_k \sim \mathcal{U}(0, 2\pi)$ and
  $r_k \sim \mathcal{U}(a, b)$ are all independent, for some $0<a<b$ such that
  the far-field assumption holds. Let $\mA_s$ collect the corresponding
  steering vectors for a uniform linear microphone array. Then $\E \norm{\mA_s
  \vec{1}}^2 \geq (1+\beta)\, \E \norm{\va(\vs_0)}^2$, where $\beta =
  \sum_{k=1}^K (\alpha_k/\alpha_0)^2$, and $\alpha_k$ are attenuations of the
  steering vectors, assumed independent from the source locations. In fact,
  $\E(\norm{\mA_s \vec{1}}^2) = (1+\beta)\, \E(\norm{\va(\vs_0)}^2) + O(1 /
  \Omega^3)$.

\end{thm}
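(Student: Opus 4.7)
The idea is to expand the squared norm as a double sum over sources and identify which contributions survive in expectation. Writing out the definition,
\[
  \norm{\mA_s \vec{1}}^2 = \sum_{m=1}^{M} \abs{\sum_{k=0}^{K} a_m(\vs_k)}^2 = \underbrace{\sum_{m,k} \abs{a_m(\vs_k)}^2}_{\text{diagonal}} \;+\; \underbrace{\sum_{m} \sum_{k \neq l} a_m(\vs_k)\,\overline{a_m(\vs_l)}}_{\text{cross}},
\]
I would treat the diagonal and cross contributions separately: the diagonal will deliver the $(1+\beta)$ main term exactly, and the cross terms will contribute the $O(1/\Omega^3)$ correction.

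For the diagonal, the independent copies $(r_k, \theta_k)$ are identically distributed across $k$, so the only $k$-dependent factor in $\E \abs{a_m(\vs_k)}^2$ outside the common random envelope $1/(4\pi\norm{\vr_m-\vs_k})^2$ is the squared attenuation $\alpha_k^2$. Hence $\E\norm{\va(\vs_k)}^2 = (\alpha_k/\alpha_0)^2\, \E\norm{\va(\vs_0)}^2$, and summing over $k = 0, \dots, K$ yields precisely $(1+\beta)\,\E\norm{\va(\vs_0)}^2$.

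The real work is in the cross terms. For $k \neq l$, independence factorizes the expectation, $\E[a_m(\vs_k)\overline{a_m(\vs_l)}] = \E[a_m(\vs_k)]\,\overline{\E[a_m(\vs_l)]}$, so it suffices to bound $\abs{\E[a_m(\vs_k)]}$. Inserting the far-field expansion $\norm{\vr_m - \vs_k} \approx r_k - \vr_m \cdot \hat{\vs}_k$ into the steering vector entry \eqref{eq:steering}, the expectation factors (up to the attenuation prefactor $\alpha_k/4\pi$) into a radial integral $\E[\eni{\kappa r_k}/r_k]$ and an angular integral $\E[\epi{\kappa \vr_m \cdot \hat{\vs}_k}]$. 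Integration by parts against the smooth amplitude $1/r$ on $[a,b]$ bounds the radial piece by $O(1/\Omega)$, while uniformity of $\theta_k$ on $[0, 2\pi)$ makes the angular piece a Bessel function $J_0(\kappa\norm{\vr_m})$, which decays as $O(1/\sqrt{\Omega})$ for large $\kappa\norm{\vr_m}$ by the standard stationary-phase asymptotics. Combining, $\abs{\E[a_m(\vs_k)]}^2 = O(1/\Omega^3)$, and summing over the finitely many microphones and ordered pairs preserves this rate.

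The main obstacle I anticipate is cleanly handling the far-field approximation so that the next-order terms in the phase expansion of $\norm{\vr_m - \vs_k}$ do not erode the decay rate of either integral, and ensuring no microphone coincides with the reference point used to define $\hat{\vs}_k$ (which would leave $J_0(0) = 1$ and break the angular decay for that one term); choosing the reference generically avoids this pathology. Once both estimates are secured, the claimed asymptotic $\E\norm{\mA_s\vec{1}}^2 = (1+\beta)\,\E\norm{\va(\vs_0)}^2 + O(1/\Omega^3)$ is immediate, and the inequality form follows for any $\Omega$ large enough that the $O(1/\Omega^3)$ correction cannot overcome the main term.
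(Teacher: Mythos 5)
Your decomposition into diagonal and cross terms --- with independence factorizing each cross expectation into an $O(1/\Omega)$ radial oscillatory integral times a $J_0$ angular average of size $O(1/\sqrt{\Omega})$, hence $O(1/\Omega^3)$ after squaring --- is exactly the paper's argument, which simply evaluates the same product in closed form as $2 J_0^2(md\kappa)\bigl[1-\cos(\Delta\kappa)\bigr]/(\Delta\kappa)^2$ and obtains the $(1+\beta)$ factor from the diagonal sum in the same way. The proposal is correct and essentially identical in approach; you even flag the $J_0(0)=1$ reference-microphone corner case that the paper's own write-up leaves implicit at $m=0$.
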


\begin{proof}
  Thanks to the far-field assumption, we can decompose the steering vector into a
  factor due to the array, and a phase factor due to different distances of
  different image sources. We have that
  \begin{equation}
    a_m = (\mA_s \vec{1})_m = \sum_{k=0}^K \alpha_k \eni{\kappa m d \sin \theta_k} \eni{\Omega \delta_k/c},
  \end{equation}
  where $d$ is the microphone spacing and $\kappa \bydef \Omega/c$. Without
  loss of generality we assume that $\delta_k \sim \mathcal{U}(a, b)$. We can
  further write
  \begin{equation}
  \label{eq:sumnorm_proof}
  \begin{aligned}
    \E \abs{a_m}^2
    &= \E~\bigg[\bigg( \sum_{k=0}^K \alpha_k \eni{\kappa m d \sin \theta_k} \eni{\kappa \delta_k} \bigg) \\
    &\phantom{= \E} \times \bigg( \sum_{\ell=0}^K \alpha_\ell \epi{\kappa m d \sin \theta_\ell} \epi{\kappa \delta_\ell} \bigg) \bigg] \\
    &\hspace{-12mm} = \sum_{k=0}^K \alpha_k^2 \!+\!\!\! \sum_{k \neq \ell=0}^K \!\! \alpha_k \alpha_\ell \E \left[ \epi{\kappa m d \left( \sin \theta_\ell - \sin \theta_k \right)} \epi{\kappa (\delta_\ell - \delta_k)}  \right].
  \end{aligned}
  \end{equation}
  Invoking the independence for $k \neq \ell$, we compute the above
  expectation as
  \begin{equation}
  \begin{aligned}
     &\E \left[  \epi{\kappa m d \left( \sin \theta_\ell - \sin \theta_k \right)} \epi{\kappa (\delta_\ell - \delta_k)} \right] \\
     & \hspace{3cm} = \frac{2 J_0^2(md\kappa)\big[1-\cos (\Delta \kappa)\big]}{(\Delta \kappa)^2},
  \end{aligned}
  \end{equation}
  where $J_0$ denotes the Bessel function of the first kind and zeroth order and
  $\Delta \bydef b-a$.

  Plugging this back into \eqref{eq:sumnorm_proof}, we obtain
  \begin{equation}
    \E \abs{a_m}^2 = \sum_{k=0}^K \alpha_k^2
    \left( 1 + 
      C
      \frac{2 J_0^2(md\kappa) \big[1-\cos (\Delta \kappa)\big]}{(\Delta \kappa)^2}\right),
  \end{equation}
  where $C = \sum_{k\neq \ell} \alpha_k\alpha_\ell / \sum_k \alpha_k^2$.
  
  Because $\abs{J_0(z)} \leq \sqrt{2/(\pi z)} + O(\abs{z}^{-1})$
  (\cite{Abramowitz:1972vw}, Eq. 9.2.1), we see that the
  expression in brackets is $1 + O(\Omega^{-3})$. Rewriting
  \begin{equation}
    \sum_{k=0}^K \alpha_k^2 = \frac{1}{M}\E \norm{\va(\vs_0)}^2
    \left(1 + \sum_{k=1}^K (\alpha_k/\alpha_0)^2\right)
  \end{equation}
  concludes the proof.
\end{proof}

\section*{Acknowledgments}

We thank the anonymous reviewers for a number of constructive suggestions that
have improved the manuscript.

\balance

\bibliographystyle{IEEEbib}
\bibliography{acurake-jstsp}

\end{document}